\newcommand{\be}{\begin{equation}}
\newcommand{\ee}{\end{equation}}
\newcommand{\R}{{\mathbb R}}
\newcommand{\E}{{\mathbb E}}
\newcommand{\PP}{{\mathbb P}}
\newcommand{\N}{{\mathbb N}}
\newcommand{\xx}{{\bf x}}
\newcommand{\yy}{{\bf y}}
\newcommand{\ur}{{\bf r}}
\newcommand{\us}{{\bf s}}
\newcommand{\ut}{{\bf t}}
\newcommand{\vr}{{\vec{\bf r}}}
\newcommand{\vs}{{\vec{\bf s}}}
\newcommand{\vt}{{\vec{\bf t}}}
\newtheorem{thm}{Proposition}
\newenvironment{proof}{\noindent{\bf Proof:} }{\hfill $\Box$ \\}
\begin{document}
\pagestyle{myheadings}
\thispagestyle{empty}
\setcounter{page}{1}

\begin{center}
{\LARGE\sc 
State estimation for temporal point processes}\\[0.5in]
{M. N. M. van Lieshout}\\[0.1in]
{\footnotesize\em CWI
\\ 
P.O. Box 94079, NL-1090 GB Amsterdam, The Netherlands \\[0.1in]
Department of Applied Mathematics,
University of Twente \\
P.O. Box 217, NL-7500 AE Enschede, The Netherlands }\\[0.5in]
\end{center}

\begin{verse}
{\footnotesize
\noindent
This paper is concerned with combined inference for point processes on the
real line observed in a broken interval. For such processes, the classic
history-based approach cannot be used. Instead, we adapt tools from sequential 
spatial point processes. For a range of models, the marginal and conditional 
distributions are derived. We discuss likelihood based inference as well 
as parameter estimation using the method of moments, conduct a simulation 
study for the important special case of renewal processes and analyse a 
data set collected by Diggle and Hawtin.\\[0.2in]

\noindent
{\em AMS Mathematics Subject Classification (2010):}
60G55; 60K15; 62M99.\\
\noindent
{\em Key words \& Phrases:} Cox process, Markov chain Monte Carlo,
Markov point process, renewal process, sequential point process,
state estimation.
}
\end{verse}

\section{Introduction}

Inference for point processes on the real line has been dominated
by a dynamic approach based on the stochastic intensity 
\cite{Brem72,Karr91,Last95} which expresses the likelihood of
a point at any given time conditional on the history of the 
process. Such an approach is quite natural
in that it is the mathematical translation of the intuitive idea 
that more and more information becomes available as time passes.
Furthermore, the approach allows the utilisation of powerful 
tools from martingale theory. In a statistical sense, since the
stochastic intensity is closely related to the hazard rates of the
distribution of the length of the inter-point intervals, a 
likelihood is immediately available \cite{DVJ-I}.

The dynamic approach, however, does not seem capable of dealing 
with situations in which the flow of time is interrupted. In such 
cases, combined state estimation techniques are needed that are 
able to simultaneously carry out inference and to reconstruct the 
missing points. More specifically, state estimation aims to find 
`the optimal reconstruction, realization by realization, of 
unobserved portions of a point process or of associated random 
variables or processes' \cite[p.~93]{Karr91}.  Other examples include
extrapolation and prediction \cite{Moll04}, filtering \cite{Sing09}, 
cluster detection \cite{Lies02} or the estimation of the driving 
measure of a Cox process \cite{Moll98}.

The aim of this paper is to apply ideas from sequential point 
process \cite{Lies06a,Lies06b}, in particular the sequential
Papangelou conditional intensity which describes the probability
of finding a point at a particular time conditional on the 
remainder of the process. Thus, the concept is related to the
stochastic intensity, except that the future is taken into 
account as well as the past. It is this last feature that allows
the incorporation of missing data. Moreover, for hereditary point 
processes at least, they define a likelihood.

The plan of this paper is as follows. In Section~\ref{S:prelim}, we 
represent a point processes on the real line as a sequential point
process on semi-cubes and recall the definition of the sequential
Papangelou conditional intensity. Section~\ref{S:broken} derives
the marginal and conditional distributions for the situation where
the point process is observed in disjoint intervals only. Important
special cases are studied in detail: Markov point processes in 
Section~\ref{S:fixed}, renewal processes in Section~\ref{S:near}
and Cox processes in Section~\ref{S:Cox}. In Section~\ref{S:simulation},
we show that a substantial bias may be incurred when missing data
is not accounted for properly. We present a  Metropolis--Hastings 
algorithm for sampling from the conditional distribution of the 
missing data as well as a locally defined birth-and-death process
based on the sequential Papangelou conditional intensity in
Section~\ref{S:Geyer} and the Appendix. These techniques are then 
used in Monte Carlo inference. Finally, in Section~\ref{S:Diggle},
a health care surveillance data set collected by Diggle and Hawtin 
is analysed to illustrate the approach and we conclude with a 
summary.

\section{Finite point process on the line}
\label{S:prelim}

In this section, we review basic concepts from the theory of 
chronologically ordered, simple point processes on an interval
$[0, T]$, $T > 0$. 

Realisations of a simple point process $X$ consist of a finite
number of distinct points
\(
\{ t_1, \dots, t_n \} \subset [0, T]
\) 
for $n\in \N_0$. Since the points are naturally ordered, there
is a unique correspondence between the set $\{ t_1, \dots, t_n \}$ 
and the vector $(t_1, \dots, t_n)$ where $t_1 < t_2 < \cdots < t_n$ 
\cite{DVJ-I,Last95}. The special case $n=0$ corresponds to an
empty realisation. Equipping $H_n([0,T]) = \{ (t_1, \dots, t_n) \in 
[0,T]^n: t_1 < \cdots < t_n \}$, $n\in\N$, with the product Borel 
$\sigma$-algebra, the distribution of a finite sequential point 
process $Y$ can be specified by \cite{Lies06a}.
\begin{itemize}
\item a probability mass function $q_n$, $n\in\N_0$, for the  
number of points in $[0, T]$;
\item for each $n\in \N$ for which $q_n>0$, a probability density 
function $p_n$ on $H_n([0,T])$ for the chronologically ordered vector 
of point locations given that there are $n$ of them.
\end{itemize}
Note that the $p_n$ need not be symmetric.

The likelihood of finding $n\in \N$ points, one at each of the locations 
$t_1, \dots, t_n$ in that order, is expressed by the Janossy densities
\cite{DVJ-I}
\[
j_n(t_1, \dots, t_n) = q_n p_n( t_1, \dots, t_n)
\]
on $H_n(T)$. For $n=0$, $j_0 = q_0$. 
More compactly, one may specify a density $f$ on 
$N^f([0,T]) = \cup_{n=0}^\infty H_n([0,T])$ (equipped with the $\sigma$-algebra
generated by the Borel product $\sigma$-fields) with respect to the 
probability measure
\[
\nu(F) = \nu_{[0,T]}(F) = \sum_{n=0}^\infty e^{-T} \int_0^T \int_{t_1}^T
\cdots \int_{t_{n-1}}^T 1_F( t_1, \dots, t_n) \, dt_1 \cdots dt_n
\]
which is related to the unnormalised Janossy densities as
\(
f(t_1, \dots, t_n) = e^T j_n(t_1, \dots, t_n)
\)
and $f(\emptyset) = e^T q_0$.
Note that $\nu_{[0,T]}$ is the distribution of a unit rate Poisson
process on $[0,T]$.

The sequential Papangelou conditional intensity \cite{Lies06b} for 
inserting $t$ at position $k \in \{ 1, \dots, n+1 \}$ is defined by
\begin{equation}
\label{e:Papa}
\lambda_k(t | t_1, \dots, t_n) =
\frac{ f( t_1, \dots, t_{k-1}, t, t_k, \dots, t_n)}{
 f(t_1, \dots, t_n)}
\end{equation}
whenever both $t_{k-1} < t < t_k$ and $f(t_1, \dots, t_n) > 0$; it is
set to zero otherwise. Reversely, provided $f(\cdot)$ is hereditary
in the sense that $f( t_1, \dots, t_n) > 0$ implies that $f( s_1, 
\dots, s_m) > 0 $ for all subsequenses $(s_1, \dots, s_m)$ of 
$(t_1, \dots, t_n)$, the density factorises as
\begin{equation}
\label{e:f-Papafactor}
f( t_1, \dots, t_n) = f(\emptyset) \prod_{i=1}^n \lambda_i(t_i |
t_1, \dots, t_{i-1}).
\end{equation}

At this point, it is important to note that the term Papangelou
conditional intensity is often abbreviated to plain `conditional
intensity'. We choose not to do so to avoid confusion with 
another factorisation often referred to by this name. Indeed, by
\cite[Prop.~7.2.III]{DVJ-I}, the probability density function
of $Y$ can be factorised as
\[
f( t_1, \dots, t_n ) =  e^T \prod_{i=1}^n h^*(t_i )
\exp\left[ - \int_0^T h^*(t ) dt \right]
\]
where $h^*$ is a combination of the hazard functions of the conditional
probability density functions $\pi_n( t_n | t_1, \dots, t_{n-1})$ that
govern the location of the $n$-th point conditional on the `past' points
$t_1, \dots, t_{n-1}$. We shall refer to $h^*$ as the `hazard
rate'. Note that one sometimes writes $h^*(t_i) =  h^*(t_i | t_1, \dots, 
t_{i-1})$ to stress the dependence on the history of $Y$ up to but not 
including $t_i$. The two concepts are related, but not identical in general. 
For a rigorous treatment, the reader is referred to \cite{Karr91,Last95}.


\section{Partially observed point processes}
\label{S:broken}

Consider a sequential point process that is observed in a broken interval 
$[0,T_1] \cup [T_2, T]$. In other words, any point falling in $(T_1, T_2)$ 
is not observed. Thus, a realisation consists of two chronologically 
ordered sequences, say $\vr = (r_1, \dots, r_k)$, $k\in\N_0$, in $[0,T_1]$ 
and $\vs = (s_1, \dots, s_l)$, $l\in \N_0$, in $[T_2, T]$.

In order to carry out likelihood based inference or state estimation,
we need, respectively, the marginal distribution of the observations
and the conditional law of the missing points. 

\begin{thm}
\label{T:interpolation}
Let $Y$ be a simple sequential point process on $[0,T]$, $T>0$, 
defined by $p_n^Y$, $q_n^Y$ as in Section~\ref{S:prelim}. 
\begin{description}
\item[(a)] The marginal distribution of the sequential point process
$Z$ defined as the restriction of $Y$ to $[0,T_1] \cup [T_2, T]$ 
has Janossy density 
\begin{equation}
\label{e:jZ}
j^Z(\vr, \vs) = \sum_{n=0}^\infty q^Y_{n+n(\vr)+n(\vs)} 
   \int_{H_n((T_1,T_2))} 
      p^Y_{n+n(\vr) + n(\vs)}( \vr, t_1, \dots, t_n, \vs ) 
         \prod_{i=1}^n dt_i 
\end{equation}
where $n(\cdot)$ denotes vector length (with the convention that the
integral for $n=0$ is equal to $p^Y_{n(\vr) + n(\vs)}(\vr, \vs)$).
In other words, $Z$ has density 
\(
f^Z(\vr, \vs) = e^{T - T_2 + T_1} j^Z(\vr, \vs)
\)
with respect to $\nu_{[0,T_1] \cup [T_2,T]}$.
\item[(b)] If $\vr$ and $\vs$ are feasible in the sense that their
marginal Janossy density (\ref{e:jZ}) is strictly positive, then the 
conditional distribution of $Y$ on $(T_1, T_2)$ given $\vr$ and $\vs$ 
has Janossy density
\begin{equation}
\label{e:jCond}
j_n( t_1, \dots, t_n | \vr, \vs ) = 
\frac{1}{j^Z(\vr, \vs)} \, q^Y_{n+n(\vr) + n(\vs)} \,
p^Y_{n+n(\vr) + n(\vs)}(\vr, t_1, \dots, t_n, \vs) 
\end{equation}
for $n\in\N$, $q_0(\vr, \vs) = q^Y_{n(\vr) + n(\vs)} / j^Z(\vr, \vs)$.
Equivalently, a conditional density with respect to $\nu_{(T_1, T_2)}$
is  
\(
f( t_1, \dots, t_n | \vr, \vs) =
e^{T_2-T_1} j_n(t_1, \dots, t_n | \vr, \vs).
\)
\end{description}
\end{thm}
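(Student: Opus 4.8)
The plan is to characterise both Janossy densities through the identity that links expectations to Janossy densities, and which follows immediately from $f^Y = e^T j^Y$ together with the definition of $\nu_{[0,T]}$: for every bounded measurable functional $F$ on $N^f([0,T])$,
\[
\E[F(Y)] = \sum_{m=0}^\infty \int_{H_m([0,T])} F(u_1,\dots,u_m)\, j^Y_m(u_1,\dots,u_m)\, du_1\cdots du_m .
\]
I would treat this as the defining property of a Janossy density and simply verify that the expression (\ref{e:jZ}) plays this role for $Z$. For part (a), fix a bounded measurable $G$ depending only on the configuration in $[0,T_1]\cup[T_2,T]$ and apply the identity to $F(Y)=G(Z)$. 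Since the three blocks $[0,T_1]$, $(T_1,T_2)$, $[T_2,T]$ are consecutive, every ordered sequence $(u_1,\dots,u_m)\in H_m([0,T])$ splits uniquely into its subsequence $\vr$ in $[0,T_1]$, its subsequence $\vt$ in $(T_1,T_2)$, and its subsequence $\vs$ in $[T_2,T]$, with concatenation $(\vr,\vt,\vs)$ recovering the original sequence and $m=n(\vr)+n(\vt)+n(\vs)$; the restriction $Z$ is represented by $(\vr,\vs)$.

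The crux of the argument, and what I expect to be the main obstacle, is to justify the decomposition
\[
\int_{H_m([0,T])} \Phi\, du_1\cdots du_m = \sum_{k+n+l=m}\ \int_{H_k([0,T_1])}\int_{H_n((T_1,T_2))}\int_{H_l([T_2,T])} \Phi(\vr,\vt,\vs)\, d\vs\, d\vt\, d\vr ,
\]
that is, that the ordered simplex decomposes, up to the null set of sequences meeting $\{T_1,T_2\}$, into a disjoint union of product simplices indexed by the number of points in each block. The point to stress is that the chronological ordering constraint is automatically consistent across blocks precisely because the intervals are increasing, so no combinatorial multiplicity appears, in contrast with the symmetric, unordered formulation. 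Substituting $\Phi(\vr,\vt,\vs)=G(\vr,\vs)\,j^Y_m(\vr,\vt,\vs)$, summing over $m$, and invoking Tonelli to interchange the nonnegative sum with the integrals, I would integrate out the unobserved block $\vt$ first. The inner sum over $n$ and integral over $H_n((T_1,T_2))$ of $j^Y_{k+n+l}(\vr,\vt,\vs)=q^Y_{k+n+l}\,p^Y_{k+n+l}(\vr,\vt,\vs)$ is exactly (\ref{e:jZ}), so what remains is $\sum_{k,l}\int\int G(\vr,\vs)\,j^Z(\vr,\vs)\, d\vs\, d\vr$. This is the fundamental identity for $Z$ on the broken interval, whence $j^Z$ is its Janossy density. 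The density claim then follows because $\nu_{[0,T_1]\cup[T_2,T]}$ carries the normaliser $e^{-(T_1+T-T_2)}$, giving $f^Z=e^{T-T_2+T_1}j^Z$.

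Part (b) is a disintegration step once (a) is in hand. For feasible $(\vr,\vs)$, so that $j^Z(\vr,\vs)>0$, I would condition on the observed block and read off the conditional law of the points in $(T_1,T_2)$ by dividing the joint Janossy density of $(\vr,\vt,\vs)$ by the marginal $j^Z(\vr,\vs)$. Concretely, for a test function $\psi$ of the unobserved block,
\[
\E[\psi(Y\cap(T_1,T_2))\mid \vr,\vs] = \frac{1}{j^Z(\vr,\vs)}\sum_{n=0}^\infty \int_{H_n((T_1,T_2))} \psi(\vt)\, q^Y_{n+n(\vr)+n(\vs)}\, p^Y_{n+n(\vr)+n(\vs)}(\vr,\vt,\vs)\, d\vt ,
\]
and identifying the integrand gives the conditional Janossy density (\ref{e:jCond}), the $n=0$ term furnishing the conditional probability of no missing point. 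Dividing by $\nu_{(T_1,T_2)}$, whose normaliser is $e^{-(T_2-T_1)}$, yields $f(\cdot\mid\vr,\vs)=e^{T_2-T_1}j_n(\cdot\mid\vr,\vs)$. The only routine verification remaining is that the conditional Janossy densities integrate to one against $\nu_{(T_1,T_2)}$, which is immediate from the very definition of $j^Z$ as the normalising sum.
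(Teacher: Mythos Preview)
Your proposal is correct and follows essentially the same route as the paper: both hinge on the decomposition of the ordered simplex $H_m([0,T])$ into the disjoint union of products $H_k([0,T_1])\times H_j((T_1,T_2))\times H_l([T_2,T])$ over $k+j+l=m$, followed by integrating out the unobserved middle block. The paper organises the computation around the events $F_{k,j,l}$ and derives $q_n^Z$ and $p_n^Z$ separately, whereas you package the same calculation via a test-function characterisation of the Janossy density and Tonelli; this is a presentational rather than a substantive difference. One small remark: since the paper's three intervals $[0,T_1]$, $(T_1,T_2)$, $[T_2,T]$ already partition $[0,T]$ exactly, your caveat about a null set of sequences meeting $\{T_1,T_2\}$ is unnecessary.
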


\begin{proof}
Split $[0,T]$ in three parts: $[0,T_1]$, $(T_1, T_2)$ and $[T_2, T]$.
Then, 
\[
N^f = \cup_{n=0}^\infty H_n([0,T]) = \cup_{n=0}^\infty 
\cup_{k, j, l : k + j + l = n}^\infty 
 H_k([0,T_1]) \times H_j((T_1, T_2)) \times H_l([T_2, T]).
\]
Write $H_{k,j,l}(T_1, T_2, T) = 
H_k([0,T_1]) \times H_j((T_1, T_2)) \times H_l([T_2, T])$. 
Then, the $\sigma$-algebra on the disjoint union set 
$\cup_{j+k+l=n} H_{k,j,l}(T_1, T_2,T)$ can be described as follows. 
A subset $B$ is measurable exactly if $B\cap H_{k,j,l}(T_1, T_2, T)$
is a Borel set for all $k,j,l$ adding up to $n$ \cite{Frem03}.
Clearly this holds for all Borel sets in $H_n([0,T])$ and,
reversely, any $B$ measurable in the union $\sigma$-algebra
is the disjoint union of Borel sets, hence a Borel set itself.
Consequently, $Y$ can also be split into three well-defined,
though possibly dependent, sequential point processes by 
restriction to $[0,T]$, $(T_1, T_2)$ and $[T_2,T]$. 

For $k, j, l \in \N_0$, write $F_{k,j,l}$ for the event that $j$ 
points fall in $(T_1, T_2)$, $k$ points in $[0,T_1]$ and $l$ 
further points in $[T_2, T]$. Then
\[
\PP(F_{k,j,l}) = q^Y_{k+j+l}
\int_{H_k([0,T_1])} \int_{H_j((T_1, T_2))} \int_{H_l([T_2, T])}
p^Y_{k+j+l}( \vr, \vt, \vs ) d\vr \, d\vt \, d\vs
\]
when $k+j+l>0$ and $q^Y_0$ otherwise.

\begin{description}
\item[(a)] For $n\in\N_0$, the marginal probability $q_n$ of placing 
$n$ points in $[0,T_1] \cup [T_2, T]$ is the probability of the disjoint 
union $\cup \{ F_{k,j,l} : k, j, l \in \N_0, k+l=n \}$, that is,
\[
q^Z_n = \sum_{k,l \in \N_0: k+l = n} \sum_{j=0}^\infty \PP(F_{k,j,l}).
\] 
Similarly, suppose that $q^Z_n > 0$ and condition on having $n$ points 
in $[0,T_1] \cup [T_2, T]$. Then, for any measurable $A \subseteq N^f(
[0,T_1] \cup [T_2, T])$,
\begin{eqnarray*}
\PP(Z \in A | n(Z) = n ) & =  &
\sum_{k,l\in\N_0: k+l=n} \sum_{j=0}^\infty 
\PP( \{ Z \in A \} \cap F_{k,j,l} | n(Z) = n )\\
& = &
\frac{1}{q_n^Z} \sum_{k,l\in\N_0: k+l=n} \sum_{j=0}^\infty 
\PP( \{ Z \in A \} \cap F_{k,j,l} ).
\end{eqnarray*}
Since $\PP( \{ Z \in A \} \cap F_{k,j,l} )$ can be written as
\[
 q^Y_{k+j+l} \int_{H_k([0,T_1])}
\int_{H_j((T_1, T_2))} \int_{ H_l([T_2, T])} 1_A(\vr, \vs) \, 
 p^Y_{k+j+l}(\vr, \vt, \vs) \, d\vr \, d\vt \, d\vs,
\]
a particular configuration consisting of $\vr \in
H_k([0,T_1])$ and $\vs \in H_l([T_2, T])$ with $k+l=n$ 
has (conditional) probability density 
\[
\frac{1}{q_n^Z}  \sum_{j=0}^\infty q^Y_{n+j} \int_{H_j((T_1, T_2))} 
    p^Y_{n+j}( \vr, \vt, \vs ) d\vt
\]
with respect to Lebesgue measure on $H_n([0,T_1] \cup [T_2, T])$.

\item[(b)] Next, condition on $\vr \in H_k([0,T_1])$ and $\vs \in 
H_l([T_2, T])$. To compute the probability of having $n$, $n\in\N_0$, 
points in between $T_1$ and $T_2$, we may restrict
ourselves to $F_{k,n,l}$. If $\PP(F_{k,n,l}) = 0$, this probability is
zero. Thus, assume $\PP(F_{k,n,l}) > 0$ and, a fortiori, 
$q^Y_{k+n+l} > 0$. On $F_{k,n,l}$, the (joint) likelihood of the
$k+n+l$ points is $q^Y_{k+n+l} p^Y_{k+n+l}(\cdot)$, hence integration 
over $H_n((T_1, T_2))$ yields that the marginal density of $(\vr, \vs)$
with respect to Lebesgue measure on $H_k([0, T_1]) \times H_l([T_2, T])$
is given by 
\[
 q^Y_{k+n+l} \int_{H_n((T_1, T_2))} p^Y_{k+n+l}(\vr, \vt, \vs) \, d\vt.
\]
We conclude that the conditional probability of finding $n$ points in
$(T_1, T_2)$ given $\vr$ and $\vs$ elsewhere is
\[
q_n(\vr, \vs) = \frac{ q^Y_{k+n+l}  \int_{H_n((T_1, T_2))} 
  p^Y_{k+n+l}(\vr, \vt, \vs) \, d\vt
 }{
  \sum_{j=0}^\infty q^Y_{k+j+l}  \int_{H_j((T_1, T_2))} 
  p^Y_{k+j+l}(\vr, \vt, \vs) \, d\vt }.
\]
The denominator is strictly positive by assumption. Moreover, a
conditional density function for the locations of points in 
$Y \cap (T_1, T_2)$ given there are $n$ such points and given
realisations $\vr$ and $\vs$ elsewhere exists and reads
\[
\frac{p^Y_{k+n+l}(\vr, \cdots, \vs)}{
\int_{H_n((T_1, T_2))} p^Y_{k+n+l}(\vr, \vt, \vs) d\vt
}
\]
provided the denominator is strictly positive, which is always the
case if $q_n(\vr, \vs) > 0$.
\end{description}
\end{proof}

For hereditary sequential point processes, we have the following 
simplification of Proposition~\ref{T:interpolation}(b).

\begin{thm}
\label{T:interpolation-hereditary}
Let $Y$ be a simple hereditary sequential point process with density $f^Y$ 
with respect to $\nu_{[0,T]}$. Let $\vr \in H_k([0,T_1])$ and $\vs\in 
H_l([T_2,T])$ be feasible in the sense that
\(
j^Z(\vr, \vs) > 0.
\)
Then the conditional distribution of $Y$ on $(T_1, T_2)$, 
$0<T_1 < T_2 < T$, given $\vr$ and $\vs$ is hereditary with Papangelou 
conditional intensity 
\begin{equation}
\label{e:ci-hereditary}
\lambda_{n+1}( t | t_1, \dots, t_n; \vr, \vs ) =
\lambda_{n+k+1}^Y( t | \vr, t_1, \dots, t_n, \vs)
\end{equation}
equal to the Papangelou conditional intensity
of $Y$ for all $(t_1, \dots, t_n) \in H_n((T_1, T_2))$  and $t \in (t_n, T_2]$
provided $f^Y(\vr, t_1, \dots, t_n, \vs) > 0$.
\end{thm}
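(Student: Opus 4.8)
The plan is to convert the conditional law from Proposition~\ref{T:interpolation}(b) into a density with respect to $\nu_{(T_1,T_2)}$ and to observe that, as a function of the interpolated points, it agrees with $f^Y$ on the concatenated configuration up to a constant that does not depend on how many points are interpolated. Writing $k = n(\vr)$ and $l = n(\vs)$ and using $f^Y(u_1,\dots,u_m) = e^T q^Y_m p^Y_m(u_1,\dots,u_m)$ with $m = n+k+l$, I would substitute into the conditional Janossy density (\ref{e:jCond}) and multiply by $e^{T_2-T_1}$, as prescribed in the last line of Proposition~\ref{T:interpolation}(b), to get
\[
f(t_1,\dots,t_n | \vr,\vs) = \frac{e^{T_2-T_1-T}}{j^Z(\vr,\vs)}\, f^Y(\vr,t_1,\dots,t_n,\vs).
\]
Feasibility ensures that the prefactor $C := e^{T_2-T_1-T}/j^Z(\vr,\vs)$ is strictly positive and, crucially, the same for every $n$.

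Given this identity, I would compute the conditional Papangelou conditional intensity straight from its definition (\ref{e:Papa}). For $(t_1,\dots,t_n)\in H_n((T_1,T_2))$ and $t\in(t_n,T_2]$, inserting $t$ in the last position yields
\[
\lambda_{n+1}(t | t_1,\dots,t_n;\vr,\vs)
= \frac{f(t_1,\dots,t_n,t | \vr,\vs)}{f(t_1,\dots,t_n | \vr,\vs)}
= \frac{C\,f^Y(\vr,t_1,\dots,t_n,t,\vs)}{C\,f^Y(\vr,t_1,\dots,t_n,\vs)},
\]
and the constant $C$ cancels. Since the points of $\vs$ lie in $[T_2,T]$ while $t\le T_2$, the inserted $t$ takes position $n+k+1$ in the chronologically ordered vector $(\vr,t_1,\dots,t_n,t,\vs)$, so the surviving ratio is exactly $\lambda^Y_{n+k+1}(t | \vr,t_1,\dots,t_n,\vs)$ by (\ref{e:Papa}). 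This is precisely (\ref{e:ci-hereditary}).

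It remains to verify heredity of the conditional law. If $f(t_1,\dots,t_n | \vr,\vs)>0$ then $f^Y(\vr,t_1,\dots,t_n,\vs)>0$ by the displayed identity, and for any subsequence $(t_{i_1},\dots,t_{i_m})$ the vector $(\vr,t_{i_1},\dots,t_{i_m},\vs)$ is a subsequence of $(\vr,t_1,\dots,t_n,\vs)$; heredity of $f^Y$ then forces $f^Y(\vr,t_{i_1},\dots,t_{i_m},\vs)>0$ and hence $f(t_{i_1},\dots,t_{i_m} | \vr,\vs)>0$. Thus the conditional density is hereditary, and by (\ref{e:f-Papafactor}) it is recovered from the intensities in (\ref{e:ci-hereditary}).

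I expect the only real subtlety to be the $n$-independence of the constant $C$: it is exactly this feature that makes the normalisation cancel in the defining ratio, collapsing the conditional Papangelou intensity onto that of $Y$. The remaining care is purely in the index bookkeeping, namely checking that a point inserted after $t_n$ and before $\vs$ lands in position $n+k+1$.
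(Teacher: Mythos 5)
Your proposal is correct and follows essentially the same route as the paper: both express the conditional density via Proposition~\ref{T:interpolation}(b) as an $n$-independent constant times $f^Y(\vr, t_1, \dots, t_n, \vs)$, deduce heredity of the conditional law from heredity of $f^Y$ applied to subsequences containing $\vr$ and $\vs$, and obtain (\ref{e:ci-hereditary}) by cancelling the constant in the defining ratio. The only cosmetic difference is that the paper first checks $q_0(\vr,\vs)>0$ explicitly (by the contradiction argument you would need to make your appeal to (\ref{e:f-Papafactor}) fully airtight) and phrases the intensity computation through that factorisation rather than directly through (\ref{e:Papa}), but the substance is identical.
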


\begin{proof}
By the proof of Proposition~\ref{T:interpolation} part (b),
\[
q_0(\vr, \vs) = \frac{q^Y_{k+l} p^Y_{k+l}(\vr, \vs)}{ j^Z(\vr, \vs)}.
\]
Therefore $q_0(\vr, \vs) = 0$ would imply that $j^Y_{k+l}(\vr, \vs)$
and hence $f^Y(\vr, \vs)$ would be zero. 
Since $Y$ is hereditary, $f^Y( \vr, t_1, \dots, t_j, \vs )$ 
would be zero for all $t_1 < \cdots < t_j$ and all $j\in \N_0$. Consequently, 
$j^Z(\vr, \vs)$ would be zero, a contradiction with the assumption. 
We conclude that $q_0(\vr,\vs) > 0$. 

For $n\in \N$, consider the chronologically ordered vector 
$(t_1, \dots, t_n)$ with components in $(T_1, T_2)$ and assume that 
$j_n(t_1, \dots, t_n | \vr, \vs) > 0$. By~(\ref{e:jCond}) and the 
assumption that $j^Z(\vr, \vs) > 0$, 
$j^Y_{k+n+l}(\vr, t_1, \dots, t_n, \vs) > 0$. Since $Y$ is hereditary, 
all subsequences also have strictly positive Janossy density, in 
particular those including both $\vr$ and $\vs$. Another appeal to 
(\ref{e:jCond}) implies that the conditional point process is
hereditary too. 

By the general theory outlined in Section~\ref{S:prelim}, the
density factorises as 
\[
f(t_1, \dots, t_n|\vr, \vs) = e^{T_2 - T_1} q_0(\vr, \vs)
\prod_{i=1}^n \lambda_i( t_i | t_1, \dots, t_{i-1}; \vr, \vs )
\]
in terms of the Papangelou conditional intensity
\[
\lambda_{i}( t_i | t_1, \dots, t_{i-1}; \vr, \vs )  =  \frac{ 
f( t_1, \dots, t_i | \vr, \vs)
}{
f( t_1, \dots, t_{i-1} | \vr, \vs )
} = 
\frac{
f^Y( \vr, t_1, \dots, t_i, \vs ) }{ f^Y( \vr, t_1, \dots, t_{i-1}, \vs) }
\]
for $f(t_1, \dots, t_{i-1} | \vr, \vs) > 0$ and zero otherwise.
The last equality follows from Proposition~\ref{T:interpolation} part (b)
and implies (\ref{e:ci-hereditary}).
\end{proof}

The Papangelou conditional intensity, in contrast to the hazard rate 
which is a function of the history of the process, is able to take
into account past as well as future points. Indeed, there does not
seem to be an analogue of (\ref{e:ci-hereditary}) in terms of $h^*$.

The theorems above cannot be simplified without specific model assumptions.
In the next sections we therefore specialise to, respectively, 
Markov point processes, nearest--neighbour processes and Cox models.

\section{Markov point process}

In this section, we consider models in which the dependence of 
a sequential point process in $(T_1, T_2)$ on its future and past 
is limited either to a border region $[T_1 - R, T_1] \cup
[T_2, T_2 + R]$ for some fixed interaction range $R$ 
(Section~\ref{S:fixed})
or by its two nearest neighbours, say $r_k$ in $[0, T_1]$ and 
$s_1 \in [T_2, T]$ (Section~\ref{S:near}).

\subsection{Fixed range dependence}
\label{S:fixed}

A chronologically ordered sequential point process $Y$ on $[0,T]$
is said to be Markov at range $R > 0$ if it is hereditary and
and, for all $( t_1,  \dots, t_n )$ for which $f^Y(t_1, \dots, t_n)
> 0$ and for all $u \neq t_i$, $i= 1, \dots, n$, the Papangelou conditional 
intensity $\lambda^Y_{n+1}(u | t_1, \dots, t_n)$ depends only on $u$ and 
the set of points $t_i < u $ that have temporal distance $ u - t_i 
\leq R$. In other words, the underlying unordered point process is Markov
in the classical sense \cite{Ripl77}. Therefore, by the 
Hammersley--Clifford theorem, the density of $Y$ factorises as
\[
f^Y(t_1, \dots, t_n)  = 1 \{ t_1 < \cdots < t_n \} \prod \psi( \yy )
\]
for some non-negative integrable interaction function $\psi$, 
ranging over the collection of {\em unordered} finite subsets $\yy$ of 
$\{ t_1, \dots, t_n\} \subseteq [0,T]$, that vanishes except on cliques. 
More precisely, $\psi(\yy) = 1$ unless it consists of pairwise $R$-close 
points. To get rid of the indicator function, define
\(
 \varphi( u, \yy ) = \psi( \yy \cup \{ u \} ) 1\{ u > \max( \yy ) \}
\)
for $u\in [0, T]$ so that
\[
f^Y(t_1, \dots, t_n) =
 f^Y(\emptyset)  \prod_{i=1}^n 
 \prod_{\ut \subseteq \{ t_1, \dots, t_{i-1} \}}
\varphi^Y( t_i, \ut) 
\]
cf.\ \cite{Lies06a}. As usual, for $n = 0$, $(t_1, \dots, t_0)$ 
is interpreted as the empty sequence.

Next, turn to (\ref{e:ci-hereditary}) and observe that it reduces to
\[
\lambda_{n+1}( t | t_1, \dots, t_n ; \vr, \vs) =
\prod_{\yy \subseteq \{ t_1, \dots, t_n \} } \left[ 
\psi( \{ t \} \cup \yy )
\prod_{\emptyset \neq \xx \subseteq \ur \cup \us }
\psi( \{  t \} \cup \xx \cup \yy )
\right],
\]
the product of all interaction functions that involve the point $t$.
We conclude that the Markov property is preserved under conditioning 
but the clique interaction functions 
\[
  \varphi( t, \yy | \vr, \vs ) =  \varphi^Y( t, \yy )
     \prod_{\emptyset \neq \xx \subseteq \ur \cup \us }
          \psi( \{  t \} \cup \xx \cup \yy )
\]
for $\yy \subseteq \{ t_1, \dots, t_n \}$ and $t > t_n$ 
may change and depend on points of $\vr$ and $\vs$ in 
$[T_1 - R, T_2 + R]$. In particular, since 
$\varphi(t, \emptyset | \vr, \vs)$ need not be constant in
$t$ even when $\varphi^Y(t, \yy)$ is, non-homogeneity may be introduced
by the conditioning.

The special case of pairwise interaction processes of the form
\begin{equation}
\label{e:pair}
f^Y( t_1, \dots, t_n ) = f^Y(\emptyset)\prod_{i=1}^n \varphi^Y(t_i, \emptyset) 
\prod_{  i < j: t_j - t_i \leq R} \varphi^Y(t_j, \{ t_i \})
\end{equation}
with $(t_1, \dots, t_n ) \in H_n([0,T])$ is of special interest.
For such models, the conditional distribution of $Y$ on $(T_1, T_2)$, 
$0<T_1 < T_2 < T$, given $\vr$ on $[0, T_1]$ and $\vs$ on $[T_2, T]$ has 
first and second order interaction functions
\begin{eqnarray*}
\varphi(t, \emptyset | \vr, \vs ) & = &
\varphi^Y( t , \emptyset) \prod_{x \in \vr: t-x \leq R} \varphi^Y( t, \{ x \} ) 
\prod_{x \in \vs: x-t \leq R} \varphi^Y ( x, \{ t \} ) ;
\\
\varphi( t_2, \{ t_1 \} | \vr, \vs ) & = & \varphi^Y( t_2, \{ t_1 \} ) .
\end{eqnarray*}
Therefore the conditional point process is also Markov at range $R$ with
pairwise interactions only. The second order interaction function is
unaffected by the conditioning. The first-order interaction function
depends on points of $\vr$ and $\vs$ up to a range $R$ from $(T_1, T_2)$.

\subsection{Nearest neighbour dependence}
\label{S:near}

A renewal process on $[0, T]$ is defined as follows \cite[Chapter~8]{Karr91}.
Starting at time $0$, let $U_i$ be a sequence of independent and identically 
distributed inter-arrival times with common probability distribution 
function $F_\pi$. We assume that $F_\pi$ is non-defective and absolutely 
continuous with density $\pi$. Its hazard function is denoted by $h_\pi$.
Set $T_0 = 0$ and $T_i = T_{i-1} + U_i$ for $i\in\N$. Then those $T_i$,
$i \geq 1$ falling in $(0,T]$ form a simple sequential point process $Y$.
For simplicity, we assume that the process starts at time zero, but
other starting distributions may also be accommodated 

Due to the independence assumptions in the model, the hazard rate of
$Y$ is particularly appealing. Write $V_t$ for the backwards recurrence 
time at $t$, that is, the difference between $t$ and the last event 
falling before or at time $t$. Then, $h^*(t) = h_\pi(V_{t-})$ \cite{Karr91}, 
so that the density $f^Y$ with respect to $\nu_{[0,T]}$ is given by
\begin{eqnarray}
\nonumber
f^Y(t_1, \dots, t_n) & = &
e^T \prod_{i=1}^n h^*(t_i )
\exp\left[ - \int_0^T h^*(t ) dt \right]
\\
& = & e^T  (1 - F_\pi(T-t_n)) 
\prod_{i=1}^{n} \pi(t_i - t_{i-1})
\label{e:fY-renewal}
\end{eqnarray}
for $(t_1, \dots, t_n) \in H_n([0,T])$ \cite{DVJ-I,Karr91}, under
the conventions that an empty product is set to one and that $t_0 = 0$.

The sequential point process $Y$ is not necessarily hereditary, 
for example when $\pi$ has small bounded support. Thus, we shall
assume that $\pi > 0$. In this case, $f^Y$ can be factorised in terms of 
its Papangelou conditional intensity
\begin{equation}
\label{e:ci-renewal}
\lambda^Y_{n+1}(t | t_1, \dots, t_n) = 
 \pi(t - t_{n} )  \frac{ 1 - F_\pi(T-t) }
{1 - F_\pi(T-t_n) }
\end{equation}
for $t_n < t \leq T$. It depends on the configuration 
$t_1, \dots, t_n$ only through the point $t_n$ that is closest to $t$, 
no matter how large the distance $t-t_n$. Such models are known as 
nearest-neighbour Markov point processes \cite{Badd89}. Their density
satisfies a factorisation similar to (\ref{e:pair}) for fixed range 
Markov processes. Indeed, for $n \in \N$,
\[
f^Y( t_1, \dots, t_n ) = f^Y(\emptyset) 
 \prod_{i=1}^n \varphi^Y(t_i, \emptyset )
\prod_{i=2}^n \varphi^Y( t_i, \{ t_{i-1} \} | t_1, \dots, t_n )
\]
and $f^Y(\emptyset) = e^T ( 1 - F_\pi(T) )$. The interaction functions
are
\[
\varphi^Y( t , \emptyset) =  \frac{ \pi(t ) ( 1 - F_\pi(T-t) ) }{
  1 - F_\pi(T) }
\]
and
\[
\varphi^Y( u, \{ t \} | t_1, \dots, t_n )  =  
  \frac{\pi(u - t) \left[ 1 - F_\pi(T) \right]
   }{
        \pi( u ) \left[  1 - F_\pi(T-t) \right] }
\]
for $u\in \{ t_1, \dots, t_n \}$ and $t = \max\{ t_i : t_i < u \}$ with
$\varphi^Y(t, \yy | t_1, \dots, t_n ) = 1$ otherwise.
Note that, as the pairwise interaction function involves two
consecutive neighbours, it depends on the 
whole configuration in contrast to the models discussed
in the previous subsection.

Next, consider the conditional distribution of $Y$ on $(T_1, T_2)$ 
given $\vr$ on $[0,T_1]$ and $\emptyset \neq \vs$ on $[T_2, T]$. 
Formula~(\ref{e:ci-hereditary}) specialises as follows: 
\begin{equation}
\label{e:ci-nnpp}
\lambda_{n+1}( t | t_1, \dots, t_n ; \vr, \vs) =
\frac{ 
\pi(t-t_{n}) \pi( s_1 - t)
}{ 
\pi(s_1-t_{n})
} 1\{ t_{n} < t < T_2 \}
\end{equation}
with the convention that for $n=0$, $t_0 = \max(\vr)$ if $\vr \neq
\emptyset$ and $t_0 = 0$ otherwise. Compared to (\ref{e:ci-renewal}),
the survival probability $1 - F_\pi( T - t)$ is replaced by
$\pi( s_1 - t)$.  If $\vs = \emptyset$, (\ref{e:ci-nnpp}) reads
\[
\lambda_{n+1}( t | t_1, \dots, t_n ; \vr, \emptyset) =
\frac{ 
\pi(t-t_{n}) ( 1 - F_\pi(T-t) )
}{ 
1 - F_\pi(T-t_n)
} 1\{ t_{n} < t < T_2 \}
\]
as in (\ref{e:ci-renewal}), except for the fact that $t$ is restricted
to $(T_1, T_2)$. We conclude that the conditional distribution is a
nearest-neighbour Markov point process on $(T_1, T_2)$. Assuming
$\vs \neq \emptyset$, for $n\in \N$, 
its density can be written as
\begin{eqnarray*}
f( t_1, \dots, t_n | \vr, \vs ) & = & f(\emptyset | \vr, \vs) 
\frac{ 
\pi(t_1 - \max(\vr) ) \pi( s_1 - t_n)
}{
\pi( s_1 - \max(\vr) )
}
\prod_{i=2}^n  \pi(t_i - t_{i-1}) \\
& = & 
 f(\emptyset | \vr, \vs ) 
 \prod_{i=1}^n \varphi(t_i, \emptyset | \vr, \vs )
\prod_{i=2}^n \varphi( t_i, \{ t_{i-1} \} | t_1, \dots, t_n; \vr, \vs )
\end{eqnarray*}
with
\[
\varphi( t , \emptyset | \vr, \vs) = 
\frac{ \pi(t-\max(\vr)) \pi( s_1 -t ) }{ \pi( s_1 - \max(\vr) ) },
\]
under the convention $\max( \emptyset ) = 0$, and
\[
\varphi(u, \{ t \} | t_1, \dots, t_n; \vr, \vs ) =
\frac {  \pi( u - t ) \pi( s_1 - \max(\vr) ) }{
 \pi( u - \max(\vr) ) \pi( s_1 - t )}
\]
for $u \in \{ t_1, \dots, t_n \}$ and
 $t = \max\{ t_i : t_i < u  \} $ with
$\varphi^Y(t, \yy | t_1, \dots, t_n ) = 1$ otherwise.
For the special case $\vs = \emptyset$, the ratio 
\(
\pi( s_1 - t ) / \pi( s_1 - \max(\vr) )
\)
should be replaced by 
\(
 ( 1 - F_\pi(T-t) ) / ( 1 - F_\pi( T - \max(\vr) ).
\)
In conclusion, the boundary conditions $\vr$ and $\vs$ affect the
interaction functions only through the two nearest neighbours
$\max(\vr)$ and $\min(\vs)$, provided they exist.

\section{Cox processes}
\label{S:Cox}

A Cox process $Y$ is a two-stage stochastic process. More specifically,
let $\Lambda(t)$ be a random field on $[0,T]$ and define the
random measure $\Psi$ by
\[
\Psi(A) = \int_A \Lambda(t) dt
\]
for all Borel sets $A \subseteq [0,T]$.
Then, conditionally on $\Lambda = \lambda$, $Y$ is a chronologically 
ordered Poisson process on $[0,T]$ with intensity function
$\lambda$. Clearly, conditions must be imposed on $\Lambda$ so that 
the integral representation of $\Psi$ is well-defined, for example
that $\Lambda$ almost surely has continuous realisations 
\cite{Adle81,Adle07}.
The density and Janossy measures of $Y$ are found by integrating
out over the distribution of $\Lambda$, that is,
\begin{equation}
\label{e:jCox}
j_n^Y(t_1, \dots, t_n ) = \E\left[ 
e^{-\Psi([0,T])} 
\prod_{i=1}^n \Lambda(t_i) \right]
\end{equation}
for all $t_1 < \cdots < t_n$ \cite{Moll04}.

Since the Papangelou conditional intensity typically is not known
explicitly, we focus on Proposition~\ref{T:interpolation}.
Firstly, upon plugging in formula (\ref{e:jCox}), 
\[
f(t_1, \dots, t_n | \vr, \vs ) = 
\frac{e^{T_2-T_1}}{j^Z(\vr, \vs)} 
 \E_{\Lambda} \left[
   e^{-\Psi([0, T]}
   \prod_{ u \in (\vr, \vs)} \Lambda(u) 
   \prod_{ i=1}^n \Lambda(t_i) 
\right]
\]
with the marginal Janossy density of $Z$, the restriction
of $Y$ to $([0,T_1] \cup [T_2, T]) $, given by
\[
j^Z(\vr, \vs) = 
\E_{\Lambda}\left[
   \exp\left[ -\Psi( ([0, T_1] \cup [T_2, T])) \right]
   \prod_{ u \in (\vr, \vs)} \Lambda(u) 
   \right]
\]
since conditional on $\Lambda$, $Z$ is a Poisson process.
Finally, splitting
\(
\exp\left[ -\Psi([0, T]) \right] =
\exp\left[ -\Psi((T_1, T_2)) 
 -\Psi( ([0, T_1] \cup [T_2, T])) \right]
\),
we conclude that, provided $\vr$ and $\vs$ are feasible in the 
sense that $j^Z(\vr, \vs) > 0$, the conditional distribution of 
$Y$ on $(T_1, T_2)$, $0<T_1 < T_2 < T$, given $\vr$ and $\vs$ 
is a Cox process with its driving random field distributed as the
\[
   \prod_{ u \in (\vr, \vs) } \Lambda(u) 
\exp\left[ - \int_{[0,T]\setminus[T_1, T_2]} \Lambda(t) dt 
\right]
\]
weighting of the distribution of $\Lambda$.

As a clarifying example, consider the so-called compound Poisson 
process on $[0,T]$ for which $\Lambda$ is constant over $[0,T]$, 
taking either of the values
$\lambda_1 \neq \lambda_2$ with equal probability. Then, writing
$N$ for the sum of the cardinalities of $\vr$ and $\vs$ and
$\tilde T$ for the length of $[0,T_1] \cup [T_2, T]$,
\[
j_n(t_1, \dots, t_n | \vr, \vs ) = 
\frac{
\sum_i \lambda_i^{n+N} \exp\left[ -\lambda_i ( \tilde T ) 
 - \lambda_i ( T_2 - T_1 ) \right]
}{
\sum_i \lambda_i^{N} \exp\left[ -\lambda_i ( \tilde T ) \right]
}.
\]
In other words, the conditioning has the effect of changing the 
probability of $\lambda_i$ from a half to
\[
\frac{\lambda_i^N e^{-\lambda_i \tilde T} }{
\lambda_1^N e^{-\lambda_1 \tilde T}  + \lambda_2^N e^{-\lambda_2 \tilde T}  }.
\]
through $N$, the number of observed points. 
A further example will be studied in detail in Section~\ref{S:Diggle}.

\section{Simulation study:  renewal processes}
\label{S:renewal}

In this section, we will consider parameter and state estimation
for a well-known renewal process model. It is shown that naive approaches 
may result in bias. Instead, we shall adapt techniques for edge correction 
in spatial data to sequential point processes on the line. 

Figure~\ref{F:sample} shows a realisation of a temporal 
renewal process on $[0,4]$ observed within $[0,1] \cup [3,4]$ 
with Erlang inter-arrival probability density
\[
\pi(x) = \frac{\lambda^\alpha}{ (\alpha-1)!}  x^{\alpha-1} e^{-\lambda x}, 
\quad x \geq 0,
\]
for $\lambda = 40$ and $\alpha=2$.

\begin{figure}[htb]
\begin{center}
\centerline{
\epsfxsize=0.4\hsize
\epsfysize=0.4\hsize
\epsffile{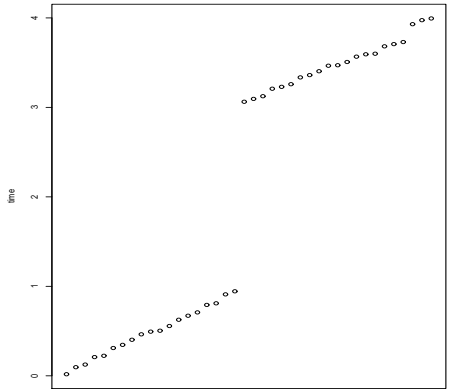}}
\end{center}
\caption{Sample from a renewal process with Erlang(2) inter-arrival times
and rate parameter $\lambda = 40$ observed in $[0,1] \cup [3,4]$. 
Time is plotted against index number.}
\label{F:sample}
\end{figure}

For renewal processes observed in an unbroken interval $[0,T]$, 
$T > 0$, \cite[Chapter~8]{Karr91} surveys two approaches to
estimate the parameters of the inter-arrival distribution $\pi$. 
The first method is to treat the fully observed inter-arrival times 
$L_i$, $i=1, \dots, N(T)$, as a random sample from $\pi$ and apply 
maximum likelihood or the method of moments to obtain parameter estimates. 
The second approach is to use the explicit representation of the 
likelihood in terms of the hazard rate $h^*$ and apply maximum 
likelihood estimation directly. Note that the first approach applies 
equally to observations on broken intervals, but the second one does not.

\subsection{Inference based on fully observed intervals}
\label{S:simulation}

Denote the lengths of the fully observed inter-arrival intervals by
$l_1, \dots, l_n$. If these would constitute a valid random sample,
for $\alpha$ fixed, the model would be an exponential family with 
sufficient statistic $\sum_i l_i$. The maximum likelihood estimator 
would exist and be given by
\(
\widehat{\lambda/\alpha} = n /  \sum_{i} l_i.
\)
The parameter $\alpha$ could be estimated by profile likelihood.
In our context, however, the sample size $n = N(T)$ is random.
Nevertheless, \cite{Karr91} suggests to proceed as if the 
$l_i$ were a random sample and claims this causes relatively 
little loss of information.

For the realisation consisting of $40$ points depicted in 
Figure~\ref{F:sample}, consider the $38$ observable inter-arrival times.
For this sample, $\hat \lambda = 40.91$, whilst the profile likelihood 
method yields $\hat \alpha = 2$. 
To assess the bias and variance, we generated $100$ data patterns on 
$[0, 4]$ for the parameter values $\lambda=40$ and $\alpha=2$.
We iteratively excluded the middle of the left-most interval and 
estimated $\lambda$. The results are summarised in the table below.

\bigskip

\begin{center}
\begin{tabular}{|l||l|l|}
\hline
observation interval & mean  & variance \\
\hline
$[0,1] \cup [3,4]$ &  $42.1$ & $25.2$ \\
\hline
$[0, 0.25] \cup [ 0.75, 1]$ & $47.4$ & $136.2$\\
\hline
$[0, 0.0625] \cup [ 0.1875, 0.25]$ & $128.1$ & $1 \times 10^{4}$ \\
\hline
\end{tabular}
\end{center}

\bigskip

We conclude that the bias and variance increase as the observation
window contains less inter-arrival intervals.
The bias occurs as smaller intervals are more likely
to fall in the observation window, a phenomenon known as length
bias \cite{Karr91}. For the smallest interval, the bias is severe.
Indeed, of the samples on the union of two intervals of length $1/16$ each, 
less than half had observable inter-arrival times at all. Moreover,
the sub-interval lengths are only slightly larger than the expected
inter-arrival length of $1/20$.

\subsection{Monte Carlo maximum likelihood with missing data}
\label{S:Geyer}

The approach of Section~\ref{S:simulation} does not make full use of the
available data, as it completely ignores partially observed inter-arrival
intervals. Here, we adapt the method of \cite{Geye99} for dealing with 
missing data to better account for such intervals.

Again, suppose $Y$ is a renewal process on $[0,T]$ that is observed on 
$[0, T_1] \cup [T_2, T]$. Write $\pi = \pi_\theta$ for the density of the
inter-arrival time distribution. By part (b) of Theorem~\ref{T:interpolation}, 
$j^Z(\vr, \vs)$ is a normalising constant for the conditional Janossy
density on $(T_1, T_2)$ given the configuration elsewhere, 
say $\vr$ on $[0,T_1]$ and $\vs$ on $[T_2, T]$. Hence, writing $U$
for the vector of un-observed points in $Y$ falling in $(T_1, T_2)$, 
the log likelihood 
ratio with respect to a given reference parameter $\theta_0$ is
\cite{Geye99,Moll04}
\[
L(\theta)  =  \log j^Z_\theta(\vr, \vs) - \log j^Z_{\theta_0}(\vr, \vs)
 =  \log \E_{\theta_0} \left[
\frac{f_\theta^Y(U, Z)}{f_{\theta_0}^Y(U,Z)} \mid Z = (\vr, \vs) 
\right],
\]
where $f_\theta^Y$ is of the form (\ref{e:fY-renewal}).
In general, there is no explicit expression, so we must rely on Monte Carlo
approximation, that is, replace the expectation by an average over a 
sample from the conditional distribution to obtain
\begin{equation}
\label{e:Geyer}
L_N(\theta) = \log\left[ \frac{1}{N} \sum_{i=1}^N \frac{ 
f_\theta^Y(\vr, U_i^*, \vs)
}{
f_{\theta_0}^Y(\vr, U_i^*, \vs)
}\right].
\end{equation}
Specifically, the $U^*_i$ form a sample from the conditional distribution of 
$U$ on $(T_1, T_2)$ given $Z = (\vr, \vs)$ on $[0,T_1] \cup [T_2, T]$
under the reference parameter $\theta_0$. 
By taking derivatives, we get the Monte Carlo score and Fisher information. 
Note that even for exponential families, in the missing data case there 
may not be a unique maximum likelihood estimator.

To generate a sample $U^*_i$, $i=1, \dots, N$, we use the Metropolis--Hastings
approach \cite{Hand11,Geye94,Gree95,Lies06a} and tune it to the present 
context. An alternative is to use birth-and-death processes
\cite{Pres77}, but note that special care is needed to avoid explosion
while ensuring sufficient mixing at the same time. Further details are 
provided in the Appendix.

\begin{figure}[hbt]
\begin{center}
\centerline{
\epsfxsize=0.4\hsize
\epsfysize=0.4\hsize
\epsffile{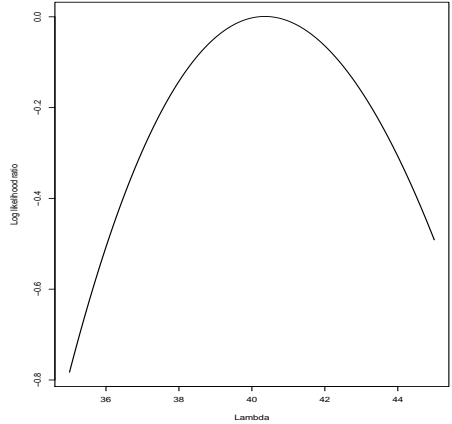}}
\end{center}
\caption{Monte Carlo log likelihood ratio for the data shown in
Figure~1 based on $10,000$ samples from a long run of the 
Metropolis--Hastings algorithm sub-sampled every thousand steps
with $\lambda_0 = 40.531$.}
\label{F:llr}
\end{figure}

Let us return to the model of Subsection~\ref{S:simulation} and
the data plotted in Figure~\ref{F:sample}. The parameter is the 
rate $\lambda$, and,  for $0 < t_1 < \cdots < t_n < T=4$,
\[
\frac{f^Y_{\lambda}(t_1, \dots, t_n)}{f^Y_{\lambda_0}(t_1, \dots, t_n)} = 
\left(\frac{\lambda}{\lambda_0}\right)^{ n \alpha }
\frac{
e^{-\lambda T} \sum_{i=0}^{\alpha -1} \lambda^i (T-t_n)^i / {i!}
}{
e^{-\lambda_0 T} \sum_{i=0}^{\alpha -1} \lambda_0^i (T-t_n)^i /{i!}
}.
\]
Hence, the sufficient statistic consists of $n$, the number of points, 
and $T-t_n$, the backward recurrence time from $T$. 
Write $\vr = (r_1, \dots, r_k)$ for the points observed in $[0, 1]$ 
and $\vs = (s_1, \dots, s_l)$ for those in $[3, 4]$. For our data,
$\vr$ consists of $19$ points, $\vs$ of $21$.  

To find a suitable
reference parameter for the log likelihood ratio, we use the 
Newton--Raphson method \cite{Geye99}.
For the data of Figure~\ref{F:sample}, this method gives 
$\lambda_0 = 40.531$. We compute the Monte Carlo log likelihood 
ratio $L_N(\lambda)$ using $N= 10,000$ samples from a long 
Metropolis--Hastings run sub-sampled every thousand steps after a
thousand steps burn-in. Note that it is sufficient to store the 
sufficient statistics only, in this case the length of the
vectors $U^*_i$.
The function is shown in Figure~\ref{F:llr}. It has a unique maximum  
at  $\hat \lambda = 40.36$; the Monte Carlo inverse Fisher information 
is $20.3$.

To assess the bias and variance, we considered a hundred data patterns
as in Section~\ref{S:simulation}. For each pattern, after ten Newton--Raphson 
steps from the true value $\lambda_0 = 40.0$, $N=1,000$ Monte 
Carlo samples were obtained by sub-sampling in a long
run every thousand steps after a burn-in of a thousand steps. 
The results are summarised in the table below.

\bigskip

\begin{center}
\begin{tabular}{|l||l|l|}
\hline
observation interval & mean ($\lambda$) & variance \\
\hline
$[0,1] \cup [3,4]$ & $40.9$ & $22.8$ \\
\hline
$[0, 0.25] \cup [ 0.75, 1]$ & $40.8$ & $76.7$ \\
\hline
$[0, 0.0625] \cup [ 0.1875, 0.25]$ & $40.8$ & $457.3$ \\
\hline
\end{tabular}
\end{center}

\bigskip

One sees that compared to the naive approach, the bias is
much reduced by correctly taking into account missing data and
does not increase noticeably for smaller sub-intervals.
The variance is also reduced but does increase when more data is
missing.

\section{Application}
\label{S:Diggle}

Figure~\ref{F:NHS} shows daily records of calls to NHS Direct,
a phone service operated by the National Health Service in Britain
which people could call 24 hours a day to get medical advice. The
calls shown are those that reported acute gastroenteric complaints 
in the county of Hampshire (on the English south coast and including
e.g.\ the towns of Winchester and Southampton). The full data were 
described and analysed in \cite{Digg05,Digg13,Digg03}. 
A modified version provided by Professor Diggle and Dr Hawtin
is available at
\begin{verbatim}
www.lancaster.ac.uk/staff/diggle/pointpatternbook/datasets/AEGISS.
\end{verbatim}
The latter data contain $7167$ records from the years 2001 and 2002.
Figure~\ref{F:NHS} plots the $327$ calls recorded during September 
and October 2001. A salient feature is that no calls were registered 
during the period September 13th--September 30th with recording 
resuming on October 1st, 2001. This gap is clearly visible in the
figure.

\begin{figure}
\begin{center}
\epsfxsize=0.4\hsize
\epsfysize=0.5\hsize
\epsffile{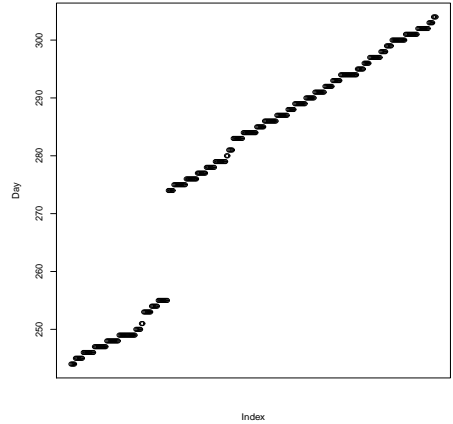}
\end{center}
\caption{Calls to NHS Direct from Hampshire in September--October
2001. The date of calls, in days starting January 1st, 2001, is 
plotted against index number.}
\label{F:NHS}
\end{figure}

As in \cite{Digg05}, we assume that calls are made according to a 
log-Gaussian Cox process on $[0, T]$ with $T=61$ (time measured in days) 
so that in the notation of Section~\ref{S:Cox}
\(
\Lambda(t) =  \mu_0(t) e^{S(t)}.
\)
The integrable function $\mu_0: [0,T] \to \R^+$ accounts for temporal 
variations  and $S$ is an Ornstein--Uhlenbeck process, 
that is, a stationary, Gaussian, Markovian process with 
exponentially decaying covariance function
\[
\mbox{Cov}(S(t), S(t^\prime)) = \sigma^2 e^{-\beta | t - t^\prime |}.
\]
The mean is chosen so that the expectation of $\exp( S(t) ) \equiv 1$.
By \cite{Adle81,Adle07,Moll98}, $S$ has almost surely continuous realisations
and the model is well-defined. 

Since the data is discretised in daily counts, the intensity of 
counts during the $i$-th day may be approximated by
\(
 \mu_0(i) e^{S(i)}
\),
$i = 0, \dots, 60$. As $S$ is an Ornstein--Uhlenbeck process, it
satisfies the Markov recursion
\begin{eqnarray*}
S(0)   & = & - \frac{\sigma^2}{2} + \sigma \Gamma(0)  \\
S(i) & = & - \frac{\sigma^2}{2} \left(  1 - e^{-\beta} \right) 
             + e^{-\beta} S(i-1) + \sigma \Gamma(i)
\end{eqnarray*}
for independent zero-mean normally distributed random variables $\Gamma(i)$
having variance $1 - \exp( - 2 \beta )$ under the convention
$1$ for $\Gamma(0)$. In terms of the $\Gamma(i)$,
\begin{equation}
\label{e:S}
S(i) = -\frac{\sigma^2}{2} + 
  \sigma \sum_{j=0}^{i} e^{-\beta (i-j)} \Gamma(j).
\end{equation}
In words, $S(i)$ is a sum of independent normally distributed
components discounted by elapsed time.

The goal of this section is two-fold. Firstly, we shall contrast 
moment based estimation with and without taking into account the
gap; then, we will consider state estimation.

\subsection{Parameter estimation}
\label{S:estNHS}

Although in principle it is possible to carry out likelihood based
inference, this would be computationally costly since the missing data 
consist not only of the missing counts in September 2001, but also 
of {\em all} the $S(i)$ or, equivalently, the $\Gamma(i)$, cf.\ 
Section~\ref{S:Cox}. Hence \cite{Digg05} proposed to use moment methods. 

\begin{figure}[hbt]
\begin{center}
\centerline{
\epsfxsize=0.4\hsize
\epsfysize=0.4\hsize
\epsffile{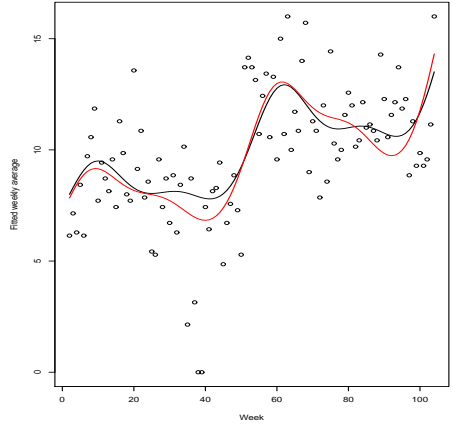}}
\end{center}
\caption{Weekly averages of calls to NHS Direct during 2001--20002 plotted
against the week index. Red line: fit by naive approach; Black: fit by 
approach taking into account the gap.}
\label{F:week-averages}
\end{figure}

First consider $\mu_0$. Since the expectation of $e^{S(i)}$ is one, 
$\mu_0(i)$ is the expected number of counts for day $i$. This observation
led \cite{Digg05} to fit a Poisson log-linear regression model of the form
\[
\log \mu_0(i) = \delta_{d(i)} + a_1 \cos \left( \frac{2 i \pi }{ 365} \right) +
b_1 \sin \left( \frac{2 i \pi }{ 365 }\right) +
a_2 \cos\left( \frac{4 i \pi }{ 365}\right) + 
b_2 \sin\left( \frac{4 i \pi }{ 365}\right) +  g i,
\]
$i=0, \dots, 60$. This model takes into account the apparent increase 
of incidences in Spring, the overall increase in calls to NHS Direct 
over time and the weekend-effect through the $d(i)$, but
ignores over-dispersion due to the Gaussian field. Note also that
the counts are conditionally rather than jointly independent.

To assess the effect of the gap, we estimate the parameters based
on two scenario's as follows. A naive estimator treats the zero counts in 
September as if they were true observations. The result is the 
red line in Figure~\ref{F:week-averages}. A more sophisticated approach
disregards the time period from September 13--30, with the 
black line in Figure~\ref{F:week-averages} as a result.
The trough due to the missing data is clearly visible in the
picture and, due to the periodic nature of the fitted model, 
also affects the behaviour a year later. 

\begin{figure}[hbt]
\begin{center}
\centerline{
\epsfxsize=0.4\hsize
\epsfysize=0.5\hsize
\epsffile{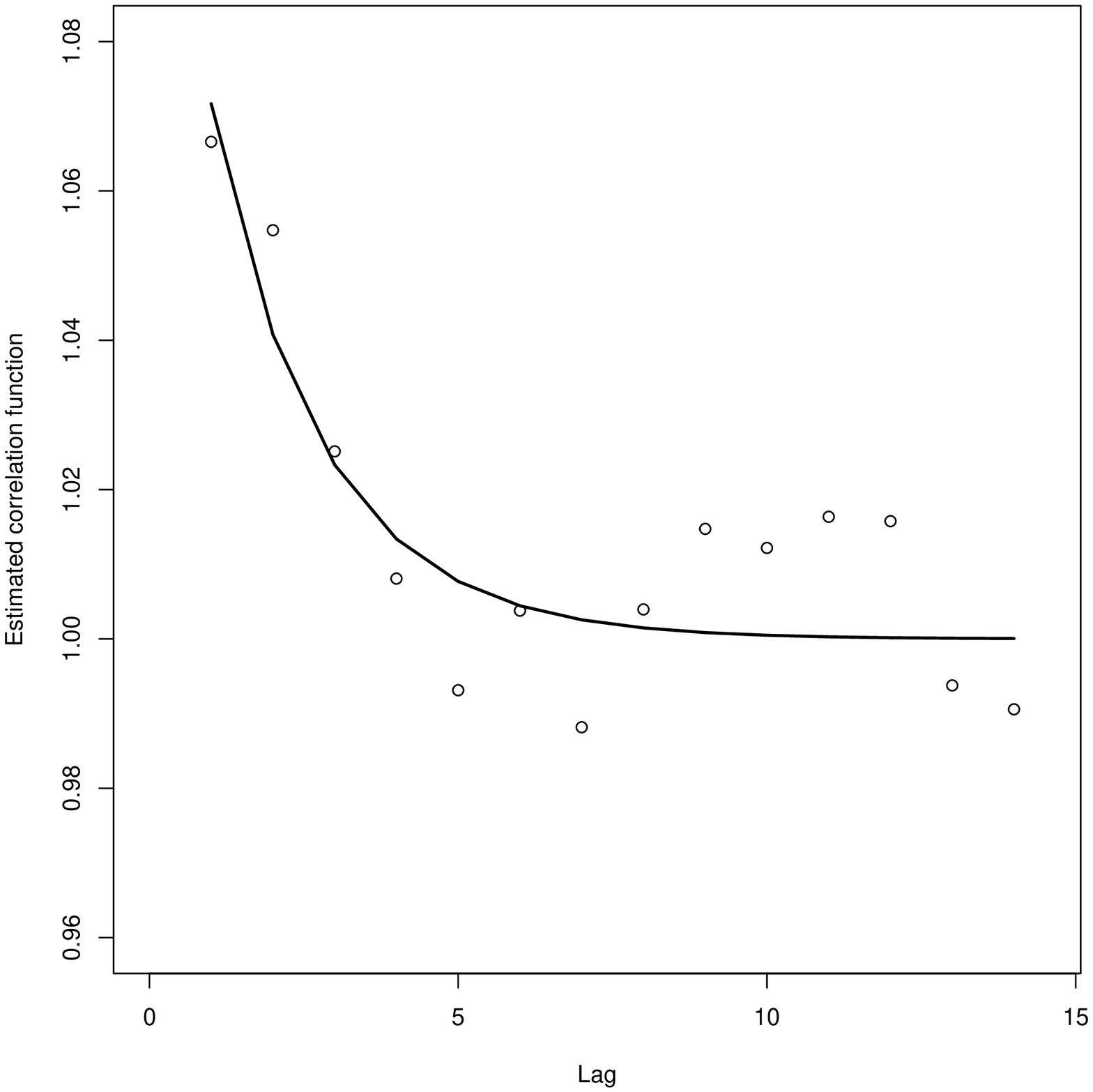}
\hspace{0.5cm}
\epsfxsize=0.4\hsize
\epsfysize=0.5\hsize
\epsffile{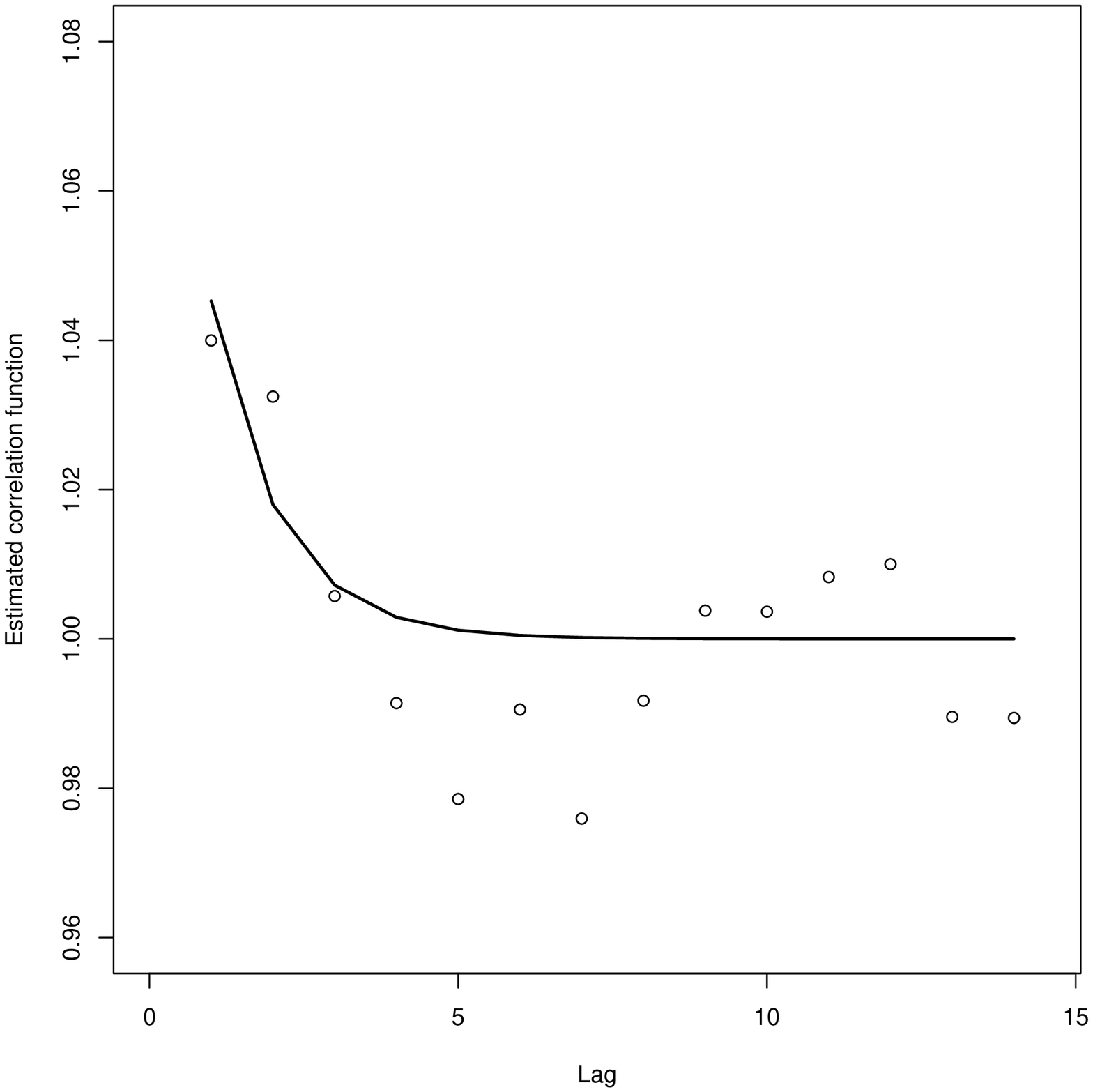}}
\end{center}
\caption{Minimum contrast fit for the naive approach (left) and when
taking into account the missing data (right). The points are the mean
of $N_i N_{i-\nu} / (\hat \mu_0(i) \hat \mu_0(i-\nu) )$ plotted as a 
function of the lag $\nu$. }
\label{F:mincontrast}
\end{figure}

Next, turn to estimation of $\sigma^2$ and $\beta$. Let $N_i$,
$i=0, \dots, I=729$, 
denote the observed number of calls during day $i$. The analogue
of the pair correlation function \cite{Chiu13} for count data is
\(
( N_i N_{i-\nu} ) / ( \mu_0(i) \mu_0(i-\nu) )
\)
which has expectation
\(
 \exp\left( \sigma^2 e^{-\beta | \nu| } \right) 
\)
at lag $\nu \neq 0$ regardless of $i$.
Therefore, the minimum contrast method minimises
\begin{equation}
\label{e:mincontrast}
\sum_{\nu=1}^m \left[ \frac{1}{I-\nu+1} \sum_{i=\nu}^I
\frac{ 
N_i N_{i-\nu} }
{
\hat \mu_0(i) \hat \mu_0(i-\nu) 
}
-
\exp\left( \sigma^2 e^{-\beta  \nu } \right) 
\right]^2
\end{equation}
over $\beta$ and $\sigma^2$. Here, $\hat \mu_0$ is an estimate of
$\mu_0$, $I = 729$ and $m$ is the number of lags considered. 

Minimising (\ref{e:mincontrast}) with $\hat \mu_0$ based on all 
data including the spurious zeroes using $m=14$ lags, we obtain
\(
\widehat{ \sigma^2} = 0.12
\)
and 
$\hat \beta = 0.55$.
The fit is indicated in the left-most panel of Figure~\ref{F:mincontrast}.
Note that $\mu_0$ is underestimated, hence the pair correlation 
function is overestimated.

The zero counts between September 13th and September 30th, 2001, should
not be taken into account. Thus, in (\ref{e:mincontrast}), we plug
in the appropriate estimator $\hat \mu_0$ (on which the black line in
Figure~\ref{F:week-averages} is based) and for lag $\nu$, consider only pairs 
$(N_i,  N_{i-\nu})$ for which both $i$ and $i-\nu$ do not fall in the 
period September 13--30. Doing so, we obtain $\widehat{\sigma^2} = 0.11$ 
and $\hat \beta = 0.91$. Thus, the value of $\sigma^2$
is not much affected, but that of $\hat \beta $ is. The fit is 
indicated in the right-most panel of Figure~\ref{F:mincontrast}.
We conclude that correlations between counts separated by a week or
more are small.

\subsection{State estimation}

Our final goal is to fill the gap in Figure~\ref{F:NHS} by sampling
from the conditional distribution given the data. Now, as
the estimated pair correlation function shown in Figure~\ref{F:mincontrast} 
is close to one for lags of a week and more, one may restrict
attention to the counts $n(T_i - 6), \dots, n(T_1)$ and 
$n(T_2), \dots, n(T_2 + 6)$ either side of the gap. Moreover, for
Cox processes, state estimation amounts to sampling from the driving
random measure, cf.\ Section~\ref{S:Cox}, which, by (\ref{e:S}), 
is uniquely defined by the $\Gamma(i)$, $i=0, \dots, T$. 
Therefore, the log likelihood, up to a constant
$c( \beta, \sigma^2, \mu_0, n(i)_i ) $, is given by
\begin{equation}
\label{e:condF}
- \frac{1}{2} \gamma(0)^2 
- \sum_{i=1}^T \frac{\gamma(i)^2}{2(1-e^{-2\beta })}
 + \sum_{i\in \{T_1 - 6, \dots, T_1, T_2, \dots, T_2 + 6 \} } \left[ n(i) S_\gamma(i) 
- \mu_0(i) e^{S_\gamma(i)} \right].
\end{equation}

\begin{figure}[htb]
\begin{center}
\centerline{
\epsfxsize=0.4\hsize
\epsfysize=0.5\hsize
\epsffile{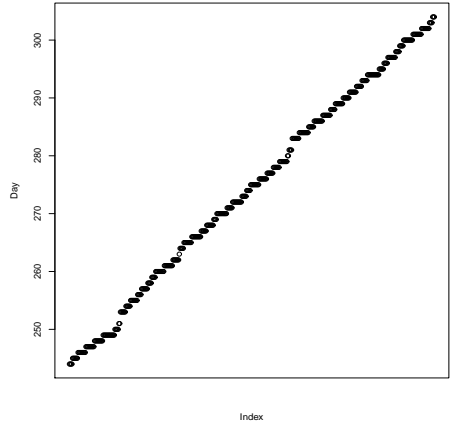}}
\end{center}
\caption{State estimation for the data shown in Figure~\ref{F:NHS}.
The date of calls is plotted against index number.}
\label{F:augmentation}
\end{figure}

\begin{figure}[hbt]
\begin{center}
\centerline{
\epsfxsize=0.4\hsize
\epsfysize=0.4\hsize
\epsffile{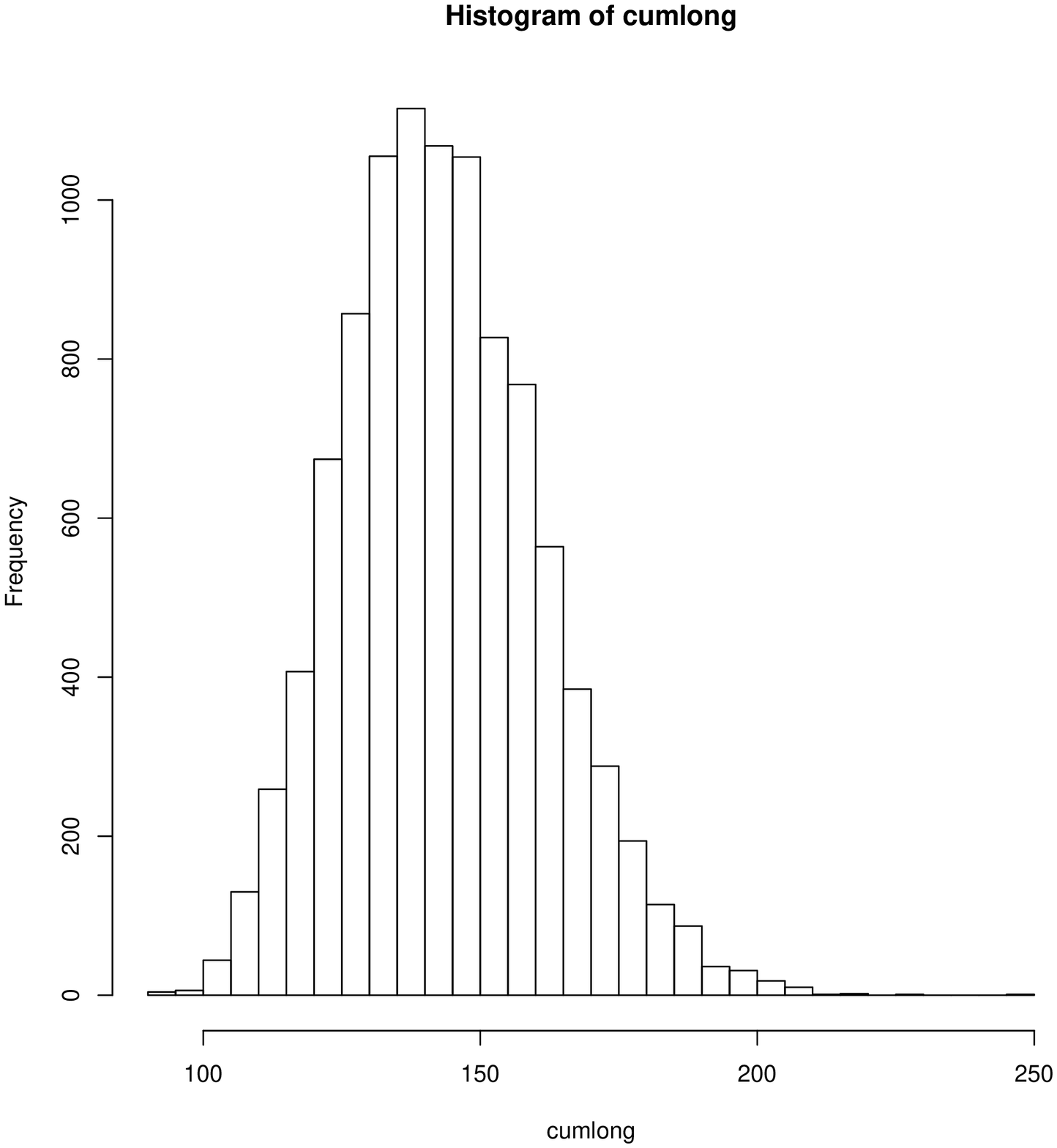}}
\end{center}
\caption{Histogram of the conditional total intensity of calls
during September 13--30, 2001, given the observed calls plotted
in Figure~\ref{F:NHS}.}
\label{F:hist}
\end{figure}

Since $c(\beta, \sigma^2, \mu_0, n(i)_i)$ cannot be evaluated exactly,
we use a Metropolis--Hastings algorithm to sample from (\ref{e:condF})
as in \cite{Brix01}. The proposal distribution
is a multivariate normal one with independent components having
variance $h>0$ and mean $h/2$ times the gradient of (\ref{e:condF}).
Thus, transitions are steered into a desirable direction.
It is shown in \cite{Robe97} that the Markov chain is
aperiodic and Lebesgue-irreducible, so that for almost all starting
states, the chain converges to (\ref{e:condF}) in total variation.

For $h = 0.5$, we ran the chain for $10,000$ steps. Using the 
realisation thus obtained, shown in Figure~\ref{F:augmentation},
we sampled the counts and re-estimated the parameters by minimum 
contrast. The results are similar to those obtained using the 
more sophisticated approach in Section~\ref{S:estNHS}. Indeed, 
$\widehat{\sigma^2} = 0.11$ and $\hat \beta = 0.93$, whereas the 
plot of $\hat \mu_0$ averaged over weeks is almost identical to the 
black line in Figure~\ref{F:week-averages}.

Next, we extended the run, sub-sampling every $1,000$ steps, to 
obtain $10,000$ realisations of (\ref{e:condF}) and calculated 
the histogram of the total intensity in the gap period. It is 
shown in Figure~\ref{F:hist}. In conclusion, around $150$ calls
may have been missed.

\section{Summary}

In this paper, we considered temporal point processes observed in
broken observation windows. We derived the marginal and conditional
distributions in terms of the Janossy densities of the underlying
sequential point process and studied Markovian and Cox model
in more detail. We carried out a simulation study to assess the
the length bias in renewal processes and analysed a real-life 
data set. The approach can easily be extended to space-time by
incorporating spatial marks. 

\section*{Acknowledgements}

Data were provided by Professor Peter Diggle, Lancaster University,
and Dr Peter Hawtin, Health Protection Agency, Southampton Laboratory.
The data were derived from anonymised data-sets collected during
project AEGISS, a collaborative surveillance project based in the south
of England.  The project was financially supported by the Food Standards 
Agency. The author acknowledges Professors Boucherie and Van der Mei
for fruitful discussions.

\clearpage
\section*{Appendix: Markov chain Monte Carlo}

To simulate a point process whose distribution is defined by the
Papangelou conditional intensity, we adapt two general strategies 
to our context. Throughout, let $Y$ be an hereditary sequential
point process on $[0, T]$ with density $f$ with respect to 
$\nu_{[0,T]}$ and Papangelou conditional intensity 
$\lambda_i(\cdot | \cdot)$. 

\subsection*{Metropolis--Hastings sampling}

The Metropolis--Hastings method works by proposing an update according to a
distribution that is convenient to sample from, and then to accept or
reject this proposal with a probability that is chosen so as to make
sure the detailed balance equations are satisfied \cite{Hand11}.

The two generic types of proposals are births and deaths. More precisely, 
with probability $1/2$ propose a birth, otherwise a death. In the first 
case, select a point $u$ uniformly on $[0,T]$ and insert it in its 
chronological position $i$. In case of a death, a point is chosen 
uniformly for deletion; if the sequence is empty, nothing happens.
Then, if the current vector is $(t_1, \dots, t_n)$, the proposal to
add $u$ is accepted with probability
\[
\min \left\{ 1,
\frac{ \lambda_i( u | t_1, \dots, t_n ) T  }{n + 1}
\right\};
\]
the proposal to delete the $i$-th point with probability
\[
\min \left\{ 1, \frac{ n} {
 \lambda_i( t_i | t_1, \dots, t_{i-1}, t_{i+1}, \dots, t_n ) T  }
\right\}.
\]

It is easily seen that $f(\cdot)$ is an invariant density. Moreover, 
if we start the chain in a sequence $\vt$ for which $f(\vt) > 0$, 
the chain will almost surely never leave the set of states having positive
density. 

In order to show that the Metropolis--Hastings chain $Y_n$, $n\in \N_0$,
converges to $f(\cdot)$ in total variation from any initial state having
positive density, it is sufficient to assume stability, that is, that
the Papangelou conditional intensity is uniformly bounded by some 
$\beta > 0$. The proof follows the same lines as in \cite{Geye99}.

\subsection*{Birth-and-death process sampling}

Since the Papangelou conditional intensity may fluctuate a lot, 
we define birth-and-death processes in terms of local bounds. 
More precisely, assume that 
\begin{equation}
\label{e:local-bound}
\lambda_i( u | t_1, \dots, t_n ) \leq g(u | t_1, \dots, t_n )\leq \beta
\end{equation}
for some integrable function $g$ that is constant on 
$(t_{i-1}, t_i)$, $i=1, \dots, n+1$, say $g_i(t_1, \dots, t_n)$, and
some $\beta > 0$, with an appropriate convention for $i=1$ and $i=n+1$. 
Then, the birth-and-death algorithm runs as follows. If the current state 
is $\vt = (t_1, \dots, t_n)$, 
\begin{itemize}
\item compute the upper bound $G(t_1, \dots, t_n)$ to the total birth rate by
\[
 G(t_1, \dots, t_n ) = \sum_{i=1}^{n+1} \int_{t_{i-1}}^{t_i} 
   g(u | t_1, \dots, t_n) du = 
\sum_{i=1}^{n+1} ( t_i - t_{i-1} ) g_i( t_1, \dots, t_n)
\]
and take death rate $D(t_1, \dots, t_n) = n$;
\item generate an exponentially distributed sojourn time with rate 
parameter $G(\vt) + D(\vt)$;
\item with probability 
\(
\frac{G(\vt)}{ G(\vt) + D(\vt) },
\)
generate a new point (`birth') as follows:
\begin{itemize}
\item sample an interval with probability 
\(
\frac{ (t_i-t_{i-1}) g_i(\vt) }{ G(\vt) }
\)
and propose a new point $u$ uniformly in the chosen interval;
\item  accept the proposal with probability 
\(
\frac{ \lambda_i(u | \vt) }{ g_i(\vt)}
\);
\end{itemize}
\item with probability 
\(
\frac{D(\vt)}{ G(\vt) + D(\vt) },
\)
delete a uniformly chosen point.
\end{itemize}

Then the rate for a transition from $\vt$ to $(t_1, \dots, t_i, u,
t_{i+1}, \dots, t_n)$ for $t_i < u < t_{i+1}$ is
\[
( G(\vt) + D(\vt) ) \frac{ G(\vt) }{ G(\vt) + D(\vt) } 
\frac{(t_i - t_{i-1}) g_i(\vt) }{ G(\vt) } \frac{1}{t_i - t_{i-1}}
\frac{\lambda_i(u | \vt) }{ g_i(\vt)} = \lambda_i(u | \vt)
\]
so that the birth rate is effectively $\lambda_i( u | t_1, \dots, t_n)$ 
on $(t_{i-1}, t_i)$. Therefore, $f$ is an invariant density. An appeal to 
\cite{Pres77} implies the existence of a unique jump
process with the given birth and death rates; $f$ is its unique 
invariant probability density and the process converges to $f$ in 
distribution from any $\vt$ for which $f(\vt) > 0$. 

\paragraph{Renewal process}

As an example, consider a renewal process with Erlang inter-arrival
times. To derive a local bound on the conditional intensity, note that
for $a < \xi < b$, 
\[
\frac{
\pi(\xi - a)\pi(b-\xi)
}{
\pi(b-a)
} \leq 
    \frac{ \lambda^\alpha }{ ( \alpha - 1 ) !}
     \left( \frac{ b-a }{4} \right)^{\alpha - 1}
\]
and 
\[
\frac{\pi(\xi-a) ( 1 - F_\pi(b-\xi) )
}{1 - F_\pi(b-a)}
=
\frac{ \lambda^\alpha }{ ( \alpha - 1 ) !}
(\xi-a)^{\alpha-1} \frac{ \sum_{i=0}^{\alpha-1}
     \lambda^i (b - \xi)^i / i! }{
 \sum_{i=0}^{\alpha-1}
\lambda^i (b-a)^{i} / i!}
\]
is bounded from above by
\[
 \frac{\lambda^\alpha (b-a)^{\alpha - 1} }{
 (\alpha -1)! \sum_{i=0}^{\alpha -1} \lambda^i (b-a)^i/i!
}
 \sum_{i=0}^{\alpha -1} \left[ 
       \frac{\lambda (b-a)}{4} \right]^i/i!.
\]
The last bound follows from the fact that 
\begin{eqnarray*}
(\xi - a)^{\alpha - 1} (b-\xi)^i & =  &
(\xi - a)^{\alpha-1-i} \left[ (b-\xi) (\xi-a) \right]^i \\
& \leq &
(b - a)^{\alpha-1-i}  \left[ \frac{(b-a)^2}{4} \right]^i 
= 
(b - a)^{\alpha-1+i} 4^{-i} .
\end{eqnarray*}
The bound can be improved upon in some cases by noting that 
the function
\[
\phi: \xi \to \pi( \xi - a) (1 - F_\pi( b - \xi))
\] 
increases on $(a,b)$ if
$\lambda - (\alpha-1) / (b-a) < 0$.
Finally, since the sub-interval length $b-a \leq T$ is bounded,
stability follows.

\end{document}